\documentclass{article}

\usepackage{arxiv}

\usepackage[utf8]{inputenc}
\usepackage[T1]{fontenc}
\usepackage{hyperref}
\usepackage{silence}
\usepackage{url}
\usepackage{booktabs}
\usepackage{amsfonts}
\usepackage{amsmath}
\usepackage{amssymb}
\usepackage{nicefrac}
\usepackage{microtype}
\usepackage{graphicx}
\usepackage{natbib}
\usepackage{doi}
\usepackage{amsthm}

\newtheorem{theorem}{Theorem}[section]
\newtheorem{lemma}[theorem]{Lemma}
\newtheorem{corollary}[theorem]{Corollary}
\newtheorem{proposition}[theorem]{Proposition}

\title{The $\beta$-Bound: Drift Constraints for Gated Quantum Probabilities}

\author{
  Jonathon Sendall\\
  OU Philosophy Department\\
  \texttt{jonathon.sendall@ou.ac.uk}
}

\begin{document}
\maketitle

\begin{abstract}
Quantum mechanics provides extraordinarily accurate probabilistic predictions, yet the framework remains silent on what distinguishes quantum systems from definite measurement outcomes. This paper develops a measurement-theoretic framework for projective gating. The central object is the $\beta$-bound, an inequality that controls how much probability assignments can drift when gating and measurement fail to commute. For a density operator $\rho$, projector $F$, and effect $E$, with gate-passage probability $s = \mathrm{Tr}(\rho F)$ and commutator norm $\varepsilon = \|[F,E]\|$, the symmetric partial-gating drift satisfies $|\Delta p_F(E)| \le 2\sqrt{(1-s)/s} \cdot \varepsilon$. The constant 2 is sharp. We introduce two diagnostic quantities: the coherence witness $W(\rho,F) = \|F\rho(I-F)\|_1$, measuring cross-boundary coherence, and the record fidelity gap $\Delta_T(\rho_F, R)$, measuring expectation-value change under symmetrisation. Three experimental vignettes demonstrate falsifiability: Hong--Ou--Mandel interferometry, atomic energy-basis dephasing, and decoherence-induced classicality. The framework is operationally accessible via multiple routes with platform-dependent trade-offs, and interpretation-neutral, compatible with Everettian, Bohmian, QBist, and collapse approaches. It provides quantitative structure that any interpretation must accommodate, along with a template for experimental tests.
\end{abstract}

\section{Introduction}

\subsection{Motivation}

Quantum mechanics provides extraordinarily accurate probabilistic predictions, yet the framework remains silent on a foundational question: what distinguishes the systems we describe quantum-mechanically from the measurement outcomes we treat as definite? This ``cut'' between quantum and classical domains is typically handled pragmatically \citep{Bohr1935,vonNeumann1955}, but the pragmatic approach obscures structure that may be physically significant.

One natural way to make the cut precise is through projective gating: conditioning on a quantum system passing through a projector that selects for some property, such as lying within a particular subspace or exceeding a coherence threshold defined spectrally. Gating of this kind appears implicitly throughout quantum mechanics, from state preparation to post-selection to the projection postulate itself \citep{Luders1951,BuschEtAl2016}. Yet the quantitative behaviour of probability assignments under gating has received less systematic attention than it deserves.

Standard textbook treatments typically assume that the observable of interest commutes with any conditioning projector, or else treat non-commutativity as a special case requiring separate analysis (e.g., the quantum Zeno effect, weak measurements) \citep{MisraSudarshan1977,AharonovEtAl1988}. This framework takes a different stance: non-commutativity between gate and readout is the generic case, and the resulting probability drift is not an anomaly but a predictable, quantitatively bounded distortion. The $\beta$-bound makes this distortion explicit.

This paper develops a measurement-theoretic framework for projective gating. The central object is the $\beta$-bound, an inequality that controls how much probability assignments can drift when gating and measurement fail to commute. The bound makes precise an intuition that practitioners often invoke informally: that interference between admitted and non-admitted sectors creates predictable distortions, and that these distortions vanish when the gate and the observable are compatible.

We also introduce two diagnostic quantities. The coherence witness measures how much a prepared state carries coherence across the gate boundary. The record fidelity gap measures how much expectation values change when we impose a symmetry that removes certain degrees of freedom. Together with the $\beta$-bound, these quantities form a package that can be tested experimentally.

Our approach is operational rather than interpretive. We do not claim to solve the measurement problem or to adjudicate between interpretations of quantum mechanics. The framework is compatible with Everettian, Bohmian, QBist, and collapse approaches \citep{Everett1957,Bohm1952,FuchsSchack2013,GhirardiRiminiWeber1986}; it provides quantitative structure that any interpretation must accommodate. What we do claim is that projective gating deserves the same careful formal treatment that measurement and dynamics receive, and that this treatment yields falsifiable predictions.

\subsection{Overview of Results}

The main results of this paper are:

\textbf{The $\beta$-bound (Theorem 4.1).} Let $F$ be a projector (the gate), $E$ an effect (the readout), and $\rho$ a density operator (the prepared state). Define the gated probability as $p_\rho(E) = \mathrm{Tr}(\rho FEF)/s$, where $s = \mathrm{Tr}(\rho F)$ is the probability of passing the gate. The symmetric partial-gating drift measures the discrepancy between partially-gated and fully-gated probability assignments:
\[
\Delta p_F(E) = \frac{\mathrm{Tr}(\rho FE) + \mathrm{Tr}(\rho EF)}{s} - 2p_\rho(E).
\]
Then
\[
|\Delta p_F(E)| \le 2\sqrt{\frac{1-s}{s}} \cdot \varepsilon,
\]
where $\varepsilon = \|[F,E]\|$ is the operator norm of the commutator. The constant 2 is sharp and cannot be improved uniformly; saturation is achievable in idealised cases with polar decomposition alignment conditions.

\textbf{The coherence witness (Section 3).} The quantity $W(\rho,F) = \|F\rho(I-F)\|_1$ measures cross-boundary coherence of the state relative to the gate. It vanishes exactly when $\rho$ is block-diagonal with respect to the gate, and satisfies $W(\rho,F) \le \sqrt{s(1-s)}$.

\textbf{The record fidelity gap (Section 5).} For a symmetry group $G$ with associated twirl map $\mathcal{T}$, the gap $\Delta_\mathcal{T}(\rho_F, R) = \mathrm{Tr}(\rho_F R) - \mathrm{Tr}(\mathcal{T}[\rho_F] R)$ measures how much a readout's expectation value changes under symmetrisation. This gap is nonzero precisely when the readout is sensitive to degrees of freedom that the twirl removes.

\textbf{Experimental vignettes (Section 6).} We provide three concrete experimental scenarios, specifying preparation, gate, readout, and symmetry in each case, with quantitative predictions that can be tested using standard quantum optical and atomic physics techniques.

\textbf{Operational access (Section 4.6).} We present three complementary routes to determine $\Delta p_F(E)$: (A) tomography and calibration, (B) ancilla-assisted access to $\mathrm{Tr}(\rho\{F,E\})$ (Jordan product), and (C) a gate-reflection identity when the coherent reflection $U_F = 2F - I$ is implementable.

\subsection{Paper Structure}

Section~2 establishes notation and mathematical preliminaries. Section~3 introduces the coherence witness. Section~4 presents the $\beta$-bound, including a worked qubit example. Section~5 develops the record fidelity gap and its connection to symmetrisation. Section~6 presents experimental vignettes. Section~7 discusses the relationship to existing frameworks and the scope of our claims.

\section{Mathematical Preliminaries}

\subsection{Notation and Assumptions}

We work in a Hilbert space $\mathcal{H}$, assumed finite-dimensional throughout the main development. For infinite-dimensional extensions, we assume $\rho$ is trace-class and all gates, effects, and readouts are bounded operators; the results carry over with standard modifications \citep{ReedSimon1980}.

A density operator $\rho$ satisfies $\rho \ge 0$ and $\mathrm{Tr}(\rho) = 1$. A projector $F$ satisfies $F = F^\dagger = F^2$. An effect $E$ satisfies $0 \le E \le I$ \citep{BuschEtAl2016}. A readout operator $R$ is bounded Hermitian with $\|R\| \le 1$; effects are the special case of readouts satisfying $0 \le R \le I$.

We use $\|\cdot\|$ for the operator norm and $\|\cdot\|_1$ for the trace norm. Energy eigenvalues are denoted $\omega_i$ to avoid notational collision with effects $E$.

\subsection{Symbol Table}

\begin{center}
\begin{tabular}{ll}
$s = \mathrm{Tr}(\rho F)$ & probability of passing the gate \\
$\rho_F = F\rho F/s$ & gated (conditioned) state \\
$p_\rho(E) = \mathrm{Tr}(\rho FEF)/s$ & fully gated probability \\
$q_L(E) = \mathrm{Tr}(\rho FE)/s$ & left-partial-gated probability \\
$q_R(E) = \mathrm{Tr}(\rho EF)/s$ & right-partial-gated probability \\
$\Delta p_F(E)$ & symmetric partial-gating drift \\
$W(\rho,F)$ & coherence witness \\
$\varepsilon = \|[F,E]\|$ & gate--readout noncommutativity \\
$\mathcal{T}$ & twirl map (Haar average over symmetry group) \\
$\mathcal{T}^*$ & dual twirl acting on observables \\
\end{tabular}
\end{center}

\subsection{Gated States and Probabilities}

Given a density operator $\rho$ and a projector $F$ with $s = \mathrm{Tr}(\rho F) > 0$, the gated state is
\[
\rho_F = \frac{F\rho F}{s}.
\]
This is the standard L\"uders update for conditioning on the outcome associated with $F$ \citep{Luders1951}. The gated state $\rho_F$ is supported entirely within the range of $F$.

The fully gated probability for an effect $E$ is
\[
p_\rho(E) = \frac{\mathrm{Tr}(\rho FEF)}{s} = \mathrm{Tr}(\rho_F E).
\]
This represents the probability of outcome $E$ given that the system passed the gate $F$.

We also define the one-sided quantities
\[
q_L(E) = \frac{\mathrm{Tr}(\rho FE)}{s}, \quad q_R(E) = \frac{\mathrm{Tr}(\rho EF)}{s}.
\]
These represent expectation values where the gate has been applied on only one side. The symmetric partial-gating drift is
\[
\Delta p_F(E) = q_L(E) + q_R(E) - 2p_\rho(E) = \frac{\mathrm{Tr}(\rho FE) + \mathrm{Tr}(\rho EF)}{s} - \frac{2\mathrm{Tr}(\rho FEF)}{s}.
\]
The drift measures how much the partially-gated quantities deviate from the fully-gated probability.

\textbf{Remark.} For effects $E$, the quantities $q_L(E)$ and $q_R(E)$ are real but need not lie in $[0,1]$ when $[F,E] \ne 0$. They are one-sided conditioned expectation values, not valid probabilities. This is precisely why drift can occur: the failure of $q_L$ and $q_R$ to coincide with the valid probability $p_\rho(E)$ reflects the boundary-crossing structure that the $\beta$-bound controls.

\textbf{Sanity check.} If $[F,E] = 0$, then $FE = EF$ and $FEF = F^2E = FE$. Hence $q_L(E) = \mathrm{Tr}(\rho FE)/s = \mathrm{Tr}(\rho FEF)/s = p_\rho(E)$, and similarly $q_R(E) = p_\rho(E)$, so $\Delta p_F(E) = 0$. The drift vanishes when gate and readout commute.

\subsection{Twirl Maps}

Let $G$ be a compact group acting on $\mathcal{H}$ via unitary representation $g \mapsto U_g$. The twirl map is the Haar average \citep{Watrous2018}
\[
\mathcal{T}[\rho] = \int_G U_g \rho U_g^\dagger \, d\mu(g),
\]
where $\mu$ is the normalised Haar measure. For a finite group, this reduces to
\[
\mathcal{T}[\rho] = \frac{1}{|G|} \sum_{g \in G} U_g \rho U_g^\dagger.
\]
The twirl projects onto the $G$-invariant subalgebra.

The dual twirl acts on observables:
\[
\mathcal{T}^*(R) = \int_G U_g^\dagger R U_g \, d\mu(g).
\]
The key identity relating these is
\[
\mathrm{Tr}(\mathcal{T}[\rho] R) = \mathrm{Tr}(\rho \mathcal{T}^*(R)).
\]
For the $U(1)$ symmetry generated by a Hamiltonian $H$ with spectral decomposition $H = \sum_i \omega_i |i\rangle\langle i|$, the twirl is
\[
\mathcal{T}[\rho] = \frac{1}{2\pi} \int_0^{2\pi} e^{-i\theta H} \rho \, e^{i\theta H} \, d\theta,
\]
which equals dephasing in the energy eigenbasis: $\mathcal{T}[\rho]$ removes all off-diagonal elements $\rho_{ij}$ with $\omega_i \ne \omega_j$ while preserving diagonal elements and coherences within degenerate subspaces.

\section{The Coherence Witness}

\subsection{Definition}

The $\beta$-bound controls drift in terms of the gate--readout commutator, but we also need a diagnostic for the state itself: how much coherence does $\rho$ carry across the gate boundary?

We define the coherence witness for a state $\rho$ relative to a gate $F$:
\[
W(\rho,F) = \|F\rho(I-F)\|_1.
\]
Here $\|\cdot\|_1$ denotes the trace norm. The quantity $F\rho(I-F)$ is the off-diagonal block of $\rho$ in the decomposition $\mathcal{H} = \mathrm{ran}(F) \oplus \mathrm{ran}(I-F)$, and its trace norm measures the total weight of coherence crossing the gate boundary. This construction is related to resource-theoretic measures of coherence \citep{BaumgratzEtAl2014,StreltsovEtAl2017}.

\subsection{Properties}

The coherence witness has several useful properties.

\textbf{Block-diagonal characterisation.} $W(\rho,F) = 0$ if and only if $\rho$ is block-diagonal with respect to the $F/(I-F)$ decomposition. Equivalently, $W(\rho,F) = 0$ if and only if $F\rho(I-F) = 0$, which holds precisely when $\rho$ has no coherence between the admitted sector (range of $F$) and its complement.

\textbf{Symmetry.} Since $\|X\|_1 = \|X^\dagger\|_1$ and $(F\rho(I-F))^\dagger = (I-F)\rho F$, we have
\[
W(\rho,F) = \|F\rho(I-F)\|_1 = \|(I-F)\rho F\|_1.
\]

\textbf{State budget inequality.} The coherence witness is bounded by a function of the gate-passage probability $s = \mathrm{Tr}(\rho F)$.

\begin{lemma}[State budget]
For any density operator $\rho$ and projector $F$ with $s = \mathrm{Tr}(\rho F) \in (0,1]$,
\[
W(\rho,F) \le \sqrt{s(1-s)}.
\]
\end{lemma}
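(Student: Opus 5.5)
The plan is to write the off-diagonal block as a product and apply Hölder's inequality for Schatten norms. Since $\rho \ge 0$, it has a positive square root, and we factor
\[
F\rho(I-F) = \bigl(F\rho^{1/2}\bigr)\bigl(\rho^{1/2}(I-F)\bigr).
\]
The Cauchy--Schwarz form of Hölder's inequality, $\|AB\|_1 \le \|A\|_2\,\|B\|_2$ with $\|\cdot\|_2$ the Hilbert--Schmidt norm, then gives
\[
W(\rho,F) \le \|F\rho^{1/2}\|_2\,\|\rho^{1/2}(I-F)\|_2 .
\]
This inequality is standard and can be quoted. It can also be derived quickly from the variational formula $\|X\|_1 = \max_{\|U\|\le 1} |\mathrm{Tr}(UX)|$ together with the Cauchy--Schwarz inequality for the Hilbert--Schmidt inner product.

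Next we evaluate the two Hilbert--Schmidt norms using $F = F^\dagger = F^2$:
\[
\|F\rho^{1/2}\|_2^2 = \mathrm{Tr}\bigl(\rho^{1/2}F F\rho^{1/2}\bigr) = \mathrm{Tr}(\rho F) = s .
\]
In the same way, since $I-F$ is also a projector,
\[
\|\rho^{1/2}(I-F)\|_2^2 = \mathrm{Tr}\bigl(\rho(I-F)\bigr) = 1-s .
\]
Substituting these gives $W(\rho,F) \le \sqrt{s}\,\sqrt{1-s} = \sqrt{s(1-s)}$, as claimed. The case $s=1$ is consistent: then $\mathrm{Tr}(\rho(I-F))=0$, which forces $\rho^{1/2}(I-F)=0$, so both sides are zero. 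In the infinite-dimensional setting the same argument goes through: $\rho$ is trace class, so $\rho^{1/2}$ is Hilbert--Schmidt, and all the products above are legitimate.

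The only step with real content is the Hölder/Cauchy--Schwarz inequality for the trace norm, and it is standard, so I do not expect a genuine obstacle. The point that needs care is choosing the factorisation through $\rho^{1/2}$. Cruder approaches fail to give the stated constant. For example, bounding $\|F\rho(I-F)\|_1$ by $\|F\|\,\|\rho\|_1\,\|I-F\|$ only yields $1$, and the block-matrix positivity route needs extra work to pass from operator norm to trace norm. Splitting $\rho$ symmetrically is exactly what turns the two sides into the populations $s$ and $1-s$.
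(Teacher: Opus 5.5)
Your proof is correct and follows essentially the same route as the paper: both factor the off-diagonal block through $\rho^{1/2}$ and evaluate the two Hilbert--Schmidt norms as $s$ and $1-s$. The only difference is that you quote H\"older's inequality $\|AB\|_1 \le \|A\|_2\|B\|_2$ directly, whereas the paper derives that case inline from the variational formula for the trace norm and Cauchy--Schwarz for the Hilbert--Schmidt inner product (exactly the derivation you mention as an alternative).
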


\begin{proof}
The trace norm is characterised by
\[
\|X\|_1 = \sup \{ |\mathrm{Tr}(U^\dagger X)| : \|U\| \le 1 \}.
\]
Let $U$ be any operator with $\|U\| \le 1$. By the Cauchy--Schwarz inequality for the Hilbert--Schmidt inner product,
\[
|\mathrm{Tr}(U^\dagger F\rho(I-F))| = |\mathrm{Tr}((\rho^{1/2}F)^\dagger (\rho^{1/2}(I-F)U))| \le \|\rho^{1/2}F\|_2 \cdot \|\rho^{1/2}(I-F)U\|_2 \le \sqrt{s} \cdot \sqrt{1-s}.
\]
Taking the supremum over all $U$ with $\|U\| \le 1$ gives the result.
\end{proof}

The bound is saturated when $\rho$ is a pure state with equal weight in the admitted and non-admitted sectors: if $\rho = |\psi\rangle\langle\psi|$ with $|\psi\rangle = \sqrt{s} |f\rangle + \sqrt{1-s} |f^\perp\rangle$ for normalised vectors $|f\rangle \in \mathrm{ran}(F)$ and $|f^\perp\rangle \in \mathrm{ran}(I-F)$, then $W(\rho,F) = \sqrt{s(1-s)}$.

\subsection{Diagnostic Role}

The coherence witness tells us whether a prepared state is ``clean'' with respect to a proposed gate. If $W(\rho,F)$ is small, the state is approximately block-diagonal and we expect minimal boundary effects. If $W(\rho,F)$ is large, the state carries substantial cross-boundary coherence and the $\beta$-bound becomes operationally relevant.

In the experimental vignettes of Section~6, the coherence witness serves as a preparation diagnostic: it quantifies how much the prepared state ``sees'' the gate boundary before any measurement is performed. Combined with the gate--readout commutator $\varepsilon = \|[F,E]\|$, which quantifies how much the readout ``sees'' the boundary, the $\beta$-bound controls the resulting drift.

The unified picture is:
\[
\text{drift} \le (\text{state budget}) \times (\text{observable budget}) \times (\text{normalisation cost})
\]
where the state budget is controlled by $W(\rho,F) \le \sqrt{s(1-s)}$, the observable budget is $\varepsilon = \|[F,E]\|$, and the normalisation cost is $1/s$ from conditioning on gate passage. The $\beta$-bound makes this precise.

\section{The \texorpdfstring{$\beta$}{β}-Bound}

\subsection{Statement of the Main Result}

We now state and prove the central inequality of this paper.

\begin{theorem}[$\beta$-bound for partial gating drift]
Let $\rho$ be a density operator, $F$ a projector, and $E$ an effect. Define $s = \mathrm{Tr}(\rho F) \in (0,1]$, $p_\rho(E) = \mathrm{Tr}(\rho FEF)/s$, and $\varepsilon = \|[F,E]\|$. Define the symmetric partial-gating drift
\[
\Delta p_F(E) = \frac{\mathrm{Tr}(\rho FE) + \mathrm{Tr}(\rho EF)}{s} - 2p_\rho(E).
\]
Then
\[
|\Delta p_F(E)| \le 2\sqrt{\frac{1-s}{s}} \cdot \varepsilon.
\]
\end{theorem}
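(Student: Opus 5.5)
The plan is to reduce the drift to a single cross-boundary term and then apply H\"older's inequality together with the state budget lemma. First, I will decompose the numerator of $\Delta p_F(E)$ using $I = F + (I-F)$ on the side of $E$ not already carrying a gate. Writing $E = EF + E(I-F)$ gives
\[
\mathrm{Tr}(\rho FE) = \mathrm{Tr}(\rho FEF) + \mathrm{Tr}(\rho FE(I-F)).
\]
Similarly, $E = FE + (I-F)E$ gives $\mathrm{Tr}(\rho EF) = \mathrm{Tr}(\rho FEF) + \mathrm{Tr}(\rho (I-F)EF)$. Hence
\[
s\,\Delta p_F(E) = \mathrm{Tr}(\rho FE(I-F)) + \mathrm{Tr}(\rho (I-F)EF).
\]
The two terms on the right are complex conjugates of each other, because $\rho$, $E$ and $F$ are Hermitian. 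Therefore $s\,\Delta p_F(E) = 2\,\mathrm{Re}\,\mathrm{Tr}(\rho FE(I-F))$.

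Second, I will express the off-diagonal block of $E$ through the commutator. Since $F(I-F)=0$ and $F^2=F$, we have
\[
F[F,E](I-F) = (FE - FEF)(I-F) = FE(I-F).
\]
Consequently $\|FE(I-F)\| \le \|F\|\,\|[F,E]\|\,\|I-F\| \le \varepsilon$. Using cyclicity and the projectors, $\mathrm{Tr}(\rho FE(I-F)) = \mathrm{Tr}\big((I-F)\rho F \cdot F[F,E](I-F)\big)$. H\"older's inequality, $|\mathrm{Tr}(XY)| \le \|X\|_1\|Y\|$, then bounds this in modulus by $\|(I-F)\rho F\|_1\,\varepsilon$.

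Third, I will invoke the symmetry property of the coherence witness to identify $\|(I-F)\rho F\|_1 = W(\rho,F)$. The state budget lemma then gives $W(\rho,F) \le \sqrt{s(1-s)}$. Combining the steps,
\[
|\Delta p_F(E)| \le \frac{2\sqrt{s(1-s)}\,\varepsilon}{s} = 2\sqrt{\frac{1-s}{s}}\,\varepsilon.
\]
In the edge case $s=1$, the lemma gives $W=0$, so both sides vanish consistently.

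The only genuinely nontrivial step is the algebraic identity in the second step, which converts the raw cross term $FE(I-F)$ into a compression of the commutator. This is what makes $\varepsilon$ appear instead of the cruder bound $\|E\|\le 1$. Once that identity is in hand, the remainder is a routine H\"older estimate followed by the state budget lemma stated earlier.
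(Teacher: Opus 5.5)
Your proof is correct and follows essentially the same route as the paper: you isolate the cross-boundary terms $\mathrm{Tr}(\rho FE(I-F))$ and $\mathrm{Tr}(\rho(I-F)EF)$, bound them by H\"older using $\|FE(I-F)\|\le\varepsilon$, and finish with the state budget lemma. The differences are cosmetic: the paper gets $\|FE(I-F)\|=\varepsilon$ exactly from the anti-diagonal block form of $[F,E]$ and bounds the two terms separately, whereas you use the compression $F[F,E](I-F)$ and pair the terms as complex conjugates, which also shows the drift is real.
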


The constant 2 is sharp and cannot be improved uniformly; saturation is achievable in idealised cases with polar decomposition alignment conditions (see Annex A for the saturation criterion). The bound is not merely an upper estimate: for any $\varepsilon > 0$ and any $s \in (0,1)$, there exist $(\rho, F, E)$ achieving $|\Delta p_F(E)|$ arbitrarily close to $2\sqrt{(1-s)/s} \cdot \varepsilon$. This tightness distinguishes the $\beta$-bound from loose worst-case inequalities common in quantum information \citep{NielsenChuang2000}.

\subsection{Proof Sketch and Corollary}

The full proof appears in Annex A. Here we outline the key steps.

\textbf{Step 1: Block decomposition.} Decompose $\mathcal{H} = \mathrm{ran}(F) \oplus \mathrm{ran}(I-F)$ and write operators in $2 \times 2$ block form:
\[
E = \begin{pmatrix} A & B \\ B^\dagger & C \end{pmatrix}, \quad \rho = \begin{pmatrix} \rho_{11} & \rho_{12} \\ \rho_{21} & \rho_{22} \end{pmatrix}
\]
where $B = FE(I-F)$ and $\rho_{21} = (I-F)\rho F$.

\textbf{Step 2: Commutator norm.} For a projector $F$,
\[
[F,E] = \begin{pmatrix} 0 & B \\ -B^\dagger & 0 \end{pmatrix}
\]
so $\|[F,E]\| = \|B\|$. Thus $\varepsilon = \|B\|$.

\textbf{Step 3: Express drift via off-diagonal blocks.} Direct computation gives
\[
\Delta p_F(E) = \frac{\mathrm{Tr}(\rho_{21} B) + \mathrm{Tr}(\rho_{12} B^\dagger)}{s}.
\]

\textbf{Step 4: Apply trace-norm bounds.} Using $|\mathrm{Tr}(XY)| \le \|X\|_1\|Y\|$,
\[
|\Delta p_F(E)| \le \frac{(\|\rho_{21}\|_1 + \|\rho_{12}\|_1)\|B\|}{s}.
\]

\textbf{Step 5: Bound off-diagonal norms.} By Lemma 3.1,
\[
\|\rho_{12}\|_1 \le \sqrt{s(1-s)}, \quad \|\rho_{21}\|_1 \le \sqrt{s(1-s)}.
\]

\textbf{Step 6: Combine.} Substituting,
\[
|\Delta p_F(E)| \le \frac{2\sqrt{s(1-s)} \cdot \|B\|}{s} = 2\sqrt{\frac{1-s}{s}} \cdot \varepsilon. \qed
\]

\begin{corollary}[Witness form of $\beta$-bound]
Under the hypotheses of Theorem 4.1,
\[
|\Delta p_F(E)| \le \frac{2}{s} W(\rho,F) \cdot \|[F,E]\|
\]
and hence, using $W(\rho,F) \le \sqrt{s(1-s)}$,
\[
|\Delta p_F(E)| \le 2\sqrt{\frac{1-s}{s}} \cdot \|[F,E]\|.
\]
\end{corollary}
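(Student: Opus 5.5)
The plan is to make the proof sketch of Theorem 4.1 precise and keep the intermediate quantity $W(\rho,F)$ explicit instead of immediately bounding it. First I fix the block decomposition of $\mathcal{H}=\mathrm{ran}(F)\oplus\mathrm{ran}(I-F)$. Write $B=FE(I-F)$, and note that $(I-F)EF=B^\dagger$ because $E$ is Hermitian. Expanding $E=FEF+FE(I-F)+(I-F)EF+(I-F)E(I-F)$ gives
\[
FE-FEF=FE(I-F)=B,\qquad EF-FEF=(I-F)EF=B^\dagger .
\]
Hence the numerator of the drift is
\[
s\,\Delta p_F(E)=\mathrm{Tr}(\rho B)+\mathrm{Tr}(\rho B^\dagger).
\]
Because $B=FB(I-F)$, cyclicity of the trace gives $\mathrm{Tr}(\rho B)=\mathrm{Tr}((I-F)\rho F\,B)$, and similarly $\mathrm{Tr}(\rho B^\dagger)=\mathrm{Tr}(F\rho(I-F)\,B^\dagger)$. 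This recovers Step 3 with $\rho_{21}=(I-F)\rho F$ and $\rho_{12}=F\rho(I-F)$.

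Next I identify the commutator norm, as in Step 2. We have $[F,E]=FE-EF=B-B^\dagger$. Here $B$ maps $\mathrm{ran}(I-F)$ into $\mathrm{ran}(F)$, and $B^\dagger$ maps in the opposite direction. The operator $(B-B^\dagger)^\dagger(B-B^\dagger)=B^\dagger B+BB^\dagger$ is therefore block diagonal, with blocks whose norms both equal $\|B\|^2$. This gives $\|[F,E]\|=\|B\|=\|B^\dagger\|$.

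Then I apply the Hölder-type inequality $|\mathrm{Tr}(XY)|\le\|X\|_1\|Y\|$ to each of the two terms. This gives
\[
s|\Delta p_F(E)|\le \|(I-F)\rho F\|_1\|B\|+\|F\rho(I-F)\|_1\|B^\dagger\|.
\]
By the symmetry property of the coherence witness in Section 3.2, both trace norms equal $W(\rho,F)$. Together with $\|B\|=\varepsilon$ this yields the first inequality:
\[
|\Delta p_F(E)|\le \frac{2}{s}\,W(\rho,F)\,\|[F,E]\|.
\]
The second inequality then follows at once from the state budget lemma (Lemma 3.1), since $\frac{2}{s}\sqrt{s(1-s)}=2\sqrt{(1-s)/s}$.

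No step is a genuine obstacle. The only place needing care is the bookkeeping in the first step: one must check that the left-partial term produces exactly $B$ and the right-partial term exactly $B^\dagger$, rather than both producing the same block. One must also check that the trace against $\rho$ picks out the off-diagonal blocks of $\rho$ in the correct order. An alternative I would mention is to combine the two terms as $2\,\mathrm{Re}\,\mathrm{Tr}(\rho_{21}B)$. This shows that the drift is real, but it gives the same constant 2. The case $s=1$ is trivial: then $\rho_{12}=0$, since $\rho\le I$ is supported in $\mathrm{ran}(F)$, so both sides vanish.
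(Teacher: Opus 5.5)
Your proof is correct and follows the paper's own argument: block decomposition relative to $F$, the identity $s\,\Delta p_F(E)=\mathrm{Tr}(\rho_{21}B)+\mathrm{Tr}(\rho_{12}B^\dagger)$, the norm identity $\|[F,E]\|=\|B\|$, and H\"older's inequality, followed by $\|\rho_{12}\|_1=\|\rho_{21}\|_1=W(\rho,F)$ and Lemma 3.1. You simply record explicitly the intermediate $W$-form that the paper's Steps 3--4 pass through. One small quibble concerns your $s=1$ remark: $\rho_{12}=0$ follows from $\rho\ge 0$ together with $\mathrm{Tr}\bigl((I-F)\rho(I-F)\bigr)=0$, not from $\rho\le I$, and this case needs no separate treatment anyway because the general argument already covers it.
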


\subsection{Interpretation}

The $\beta$-bound has a clean physical interpretation. The drift $\Delta p_F(E)$ is nonzero only when the effect $E$ fails to respect the gate $F$, in the sense that $E$ carries amplitude across the $F/(I-F)$ boundary. The commutator norm $\varepsilon = \|[F,E]\|$ measures exactly this failure of compatibility: it equals the norm of the off-diagonal block $FE(I-F)$, which is the component of $E$ that mixes admitted and non-admitted sectors.

The prefactor $\sqrt{(1-s)/s}$ is the unavoidable normalisation penalty when we condition on passing the gate. When $s$ is close to 1 (the state is mostly in the admitted sector), this penalty is small. When $s$ is close to 0 (conditioning on a rare event), the penalty diverges, amplifying any boundary-crossing effects.

The unified picture from Section 3.3 is now precise:
\[
|\Delta p_F(E)| \le 2 \cdot \frac{\sqrt{1-s}}{\sqrt{s}} \cdot \varepsilon = 2 \cdot (\text{state-side factor}) \cdot (\text{observable-side factor})
\]
where the state-side factor $\sqrt{1-s}$ reflects coherence outside the admitted sector, the observable-side factor $\varepsilon$ reflects the readout's boundary-crossing component, and division by $\sqrt{s}$ is the conditioning cost.

\subsection{Worked Example: Qubit with Tilted Gate}

To make the bound concrete, consider a qubit system with computational basis $\{|0\rangle, |1\rangle\}$.

\textbf{Setup.} Let the gate be a rank-1 projector onto a superposition:
\[
F = |f\rangle\langle f|, \quad |f\rangle = \cos\theta |0\rangle + \sin\theta |1\rangle.
\]
Let the readout be the computational-basis effect $E = |0\rangle\langle 0|$. Let the prepared state be $\rho = |\psi\rangle\langle\psi|$ with $|\psi\rangle = \alpha|0\rangle + \beta|1\rangle$, where $|\alpha|^2 + |\beta|^2 = 1$.

\textbf{Gate-passage probability.}
\[
s = \mathrm{Tr}(\rho F) = |\langle f|\psi\rangle|^2 = |\alpha \cos\theta + \beta \sin\theta|^2.
\]

\textbf{Commutator norm.} In the $\{|0\rangle, |1\rangle\}$ basis,
\[
F = \begin{pmatrix} \cos^2\theta & \cos\theta \sin\theta \\ \cos\theta \sin\theta & \sin^2\theta \end{pmatrix}, \quad E = \begin{pmatrix} 1 & 0 \\ 0 & 0 \end{pmatrix}
\]
so
\[
[F,E] = \begin{pmatrix} 0 & -\cos\theta \sin\theta \\ \cos\theta \sin\theta & 0 \end{pmatrix}
\]
and therefore
\[
\varepsilon = \|[F,E]\| = |\cos\theta \sin\theta| = \tfrac{1}{2}|\sin 2\theta|.
\]

\textbf{Bound behaviour.} The $\beta$-bound gives
\[
|\Delta p_F(E)| \le 2\sqrt{\frac{1-s}{s}} \cdot \tfrac{1}{2}|\sin 2\theta| = \sqrt{\frac{1-s}{s}} \cdot |\sin 2\theta|.
\]

Several limiting cases illuminate the bound:

\textbf{Commuting case ($\theta = 0$ or $\theta = \pi/2$).} When $\theta = 0$, $F = |0\rangle\langle 0| = E$, so $[F,E] = 0$ and $\varepsilon = 0$. The bound gives $|\Delta p_F(E)| \le 0$, which is tight. Similarly for $\theta = \pi/2$, where $F = |1\rangle\langle 1|$ and $[F,E] = 0$.

\textbf{Maximal non-commutativity.} When $\theta = \pi/4$, we have $|f\rangle = (|0\rangle + |1\rangle)/\sqrt{2}$ and $\varepsilon = 1/2$. The bound becomes $|\Delta p_F(E)| \le \sqrt{(1-s)/s}$.

\textbf{Rare-event conditioning.} If $|\psi\rangle \approx |f^\perp\rangle$ (the state orthogonal to the gate), then $s \to 0$ and the prefactor $\sqrt{(1-s)/s} \to \infty$. This is the rare-event amplification regime: conditioning on an unlikely outcome magnifies any boundary effects. In practice, preparations ensure $s \ge s_{\min} > 0$, keeping the bound numerically stable.

\subsection{Degenerate-Support Regime}

The $\beta$-bound diverges as $s \to 0$. This is not a defect but a feature: when we condition on an event with vanishing probability, any finite numerator creates an unbounded ratio.

Physically, $s \to 0$ corresponds to rare-event conditioning. The prepared state has almost no weight in the admitted sector, so passing the gate is unlikely, and the post-selection amplifies small effects. This is the same phenomenon that appears in weak-value amplification and post-selected metrology \citep{AharonovEtAl1988,DresselEtAl2014}.

In operational use, one works in regimes where $s$ is bounded away from 0 by preparation design. The vignettes in Section~6 assume $s \ge s_{\min}$ set by experimental parameters, ensuring the bound yields finite, testable predictions.

\subsection{Operationalising the Drift}

The drift $\Delta p_F(E)$ is well-defined for fixed $(\rho,F,E)$, but its experimental determination depends on available control. We distinguish three routes with different burdens and assumptions.

\textbf{Route A (tomography / calibration).} Perform quantum state tomography for $\rho$ and calibrate the POVM element $E$. Then compute $s=\mathrm{Tr}(\rho F)$, $p_\rho(E)=\mathrm{Tr}(\rho FEF)/s$, and the one-sided quantities $q_L(E)$ and $q_R(E)$ from the reconstructed models. This route is the most general but scales poorly with system dimension.

\textbf{Route B (ancilla-assisted Jordan product).} The drift numerator can be written in terms of the Jordan product $\{F,E\}=FE+EF$:
\[
\Delta p_F(E)=\frac{\mathrm{Tr}(\rho\{F,E\})-2\mathrm{Tr}(\rho FEF)}{s}.
\]
One can access $\mathrm{Tr}(\rho\{F,E\})$ without full tomography using an ancilla-assisted interferometric protocol (a Hadamard-test variant) \citep{NielsenChuang2000,Watrous2018}: prepare an ancilla qubit in $|+\rangle$, apply controlled-$F$ and controlled-$E$ operations (with the ancilla as control), then measure the ancilla in the $X$ basis. The resulting statistics estimate the real part of $\mathrm{Tr}(\rho FE)$, and hence $\mathrm{Tr}(\rho\{F,E\})$. This avoids full tomography but requires an ancilla and controlled operations. When $F$ is high-dimensional or realised via post-selection, controlled-$F$ may be comparably demanding to reflection-based access.

\textbf{Route C (gate reflection).} If the platform permits coherent access to both $\mathrm{ran}(F)$ and $\mathrm{ran}(I-F)$, one can implement the unitary reflection
\[
U_F = 2F - I.
\]
In this case the drift admits the identity (derived in the Appendix):
\[
\Delta p_F(E)=\frac{\mathrm{Tr}(\rho E)-\mathrm{Tr}(\rho U_F E U_F)}{2s}.
\]
Operationally this reduces the drift to two expectation values of $E$ (before and after applying $U_F$), together with an estimate of $s$.

\textbf{Feasibility conditions for Route C.} Route C is feasible only when (i) $F$ is a coherently implementable projector in the effective Hilbert space (not realised via irreversible filtering or loss), (ii) the experiment retains coherent access to $\mathrm{ran}(I-F)$ (no irreversible discard of the rejected sector), and (iii) the phase flip on $\mathrm{ran}(I-F)$ can be synthesised with fidelity comparable to other platform operations.

\begin{quote}
\textbf{Remark (Route C trade-off).} The raw contrast $|\mathrm{Tr}(\rho E)-\mathrm{Tr}(\rho U_F E U_F)|$ is bounded by $4\sqrt{s(1-s)}\,\|[F,E]\|$ and therefore vanishes as $s\to 0$, whereas the normalised drift $|\Delta p_F(E)|$ can grow like $1/\sqrt{s}$ in the same limit. Route C therefore provides its cleanest signatures in moderate-$s$ regimes (e.g.\ $s\gtrsim 0.1$), while rare-event conditioning is better accessed via Routes A or B.
\end{quote}

\textbf{Implementation imperfection.} If the implemented channel is $\delta$-close (in diamond norm) to the ideal unitary conjugation by $U_F$, then the induced systematic error in the Route C contrast is at most $2\delta$. For a meaningful test, one requires this bias to be small compared to the expected contrast scale $4\sqrt{s(1-s)}\,\|[F,E]\|$.

\subsection{Extension to General Readouts}

The $\beta$-bound extends from effects to bounded Hermitian readouts. For a readout $R$ with $\|R\| \le 1$, define the expectation drift
\[
\Delta_F(R) = \frac{\mathrm{Tr}(\rho FR) + \mathrm{Tr}(\rho RF)}{s} - \frac{2\mathrm{Tr}(\rho FRF)}{s}.
\]
Then
\[
|\Delta_F(R)| \le 2\sqrt{\frac{1-s}{s}} \cdot \|[F,R]\|.
\]
The proof is identical: nowhere does the argument use $0 \le E \le I$, only boundedness. For effects, the drift has a probability interpretation; for general readouts, it controls expectation-value discrepancies.

This extension is useful when the natural observable is not positive, such as the coherence readouts $R_{ij} = |i\rangle\langle j| + |j\rangle\langle i|$ appearing in Section~6.

\section{Record Fidelity and Symmetrisation}

\subsection{Setup}

We now turn from gating to symmetrisation. The question is: when we impose a symmetry on a gated state, how much information is lost from the perspective of a given readout?

We treat records as readouts. A readout may be an effect ($0 \le E \le I$) when we discuss probabilities, or a bounded Hermitian observable $R$ with $\|R\| \le 1$ when we discuss coherence-sensitive expectation values. The record fidelity gap compares the readout before and after symmetrisation; it is nonzero precisely when the readout is sensitive to degrees of freedom removed by the twirl.

Throughout this section, we assume the gate respects the symmetry generating the twirl:
\[
[U_g, F] = 0 \quad \text{for all } g \in G,
\]
or equivalently $[H, F] = 0$ in the $U(1)$ case. This ensures that gating and twirling are compatible operations.

When $[H, F] = 0$, twirling and gating commute in the sense that $\mathcal{T}[F\rho F] = F \mathcal{T}[\rho] F$, so the record fidelity gap reflects symmetry-removal in the readout, not an ordering artefact.

\subsection{The Record Fidelity Gap}

For a gated state $\rho_F = F\rho F/s$ and a readout $R$, the record fidelity gap is
\[
\Delta_\mathcal{T}(\rho_F, R) = \mathrm{Tr}(\rho_F R) - \mathrm{Tr}(\mathcal{T}[\rho_F] R).
\]
Using the duality $\mathrm{Tr}(\mathcal{T}[\rho] R) = \mathrm{Tr}(\rho \mathcal{T}^*(R))$, this is equivalently
\[
\Delta_\mathcal{T}(\rho_F, R) = \mathrm{Tr}(\rho_F (R - \mathcal{T}^*(R))).
\]
The gap measures the expectation-value difference between the gated state and its symmetrised version. It quantifies how much the readout $R$ can distinguish $\rho_F$ from $\mathcal{T}[\rho_F]$.

\textbf{Interpretation.} The twirl $\mathcal{T}$ removes degrees of freedom that transform nontrivially under the symmetry $G$. The gap $\Delta_\mathcal{T}(\rho_F, R)$ is large when two conditions hold: the gated state $\rho_F$ has weight on those degrees of freedom, and the readout $R$ is sensitive to them.

\subsection{When the Gap Vanishes}

The record fidelity gap vanishes in two natural cases.

\textbf{$G$-invariant readouts.} If $R$ is $G$-invariant, meaning $U_g^\dagger R U_g = R$ for all $g \in G$, then $\mathcal{T}^*(R) = R$ and
\[
\Delta_\mathcal{T}(\rho_F, R) = \mathrm{Tr}(\rho_F (R - R)) = 0
\]
for all $\rho_F$. The readout cannot see degrees of freedom that the twirl removes, so symmetrisation makes no difference from its perspective.

\textbf{Already-symmetrised states.} If $\rho_F$ is $G$-invariant, meaning $\mathcal{T}[\rho_F] = \rho_F$, then
\[
\Delta_\mathcal{T}(\rho_F, R) = \mathrm{Tr}(\rho_F R) - \mathrm{Tr}(\rho_F R) = 0
\]
for all $R$. The state has no asymmetric degrees of freedom to remove.

These vanishing conditions are features, not degeneracies. They tell us that the gap detects a specific kind of structure: the combination of state asymmetry and readout sensitivity to that asymmetry.

\subsection{The $U(1)$ Case: Energy Dephasing}

The most important special case is $U(1)$ symmetry generated by a Hamiltonian $H$ with non-degenerate spectrum $H = \sum_i \omega_i |i\rangle\langle i|$. The twirl becomes dephasing in the energy eigenbasis:
\[
\mathcal{T}[\rho] = \sum_i |i\rangle\langle i| \rho |i\rangle\langle i| = \sum_i \rho_{ii} |i\rangle\langle i|.
\]
For the coherence readout $R_{ij} = |i\rangle\langle j| + |j\rangle\langle i|$ with $i \ne j$ and $\omega_i \ne \omega_j$, the dual twirl gives $\mathcal{T}^*(R_{ij}) = 0$ because $R_{ij}$ transforms nontrivially under phase rotations.

Therefore the record fidelity gap becomes
\[
\Delta_\mathcal{T}(\rho_F, R_{ij}) = \mathrm{Tr}(\rho_F R_{ij}) - 0 = \mathrm{Tr}(\rho_F R_{ij}) = 2 \mathrm{Re}((\rho_F)_{ij}).
\]
The gap directly measures the real part of the off-diagonal coherence in the gated state. This is the quantity that dephasing removes.

\textbf{Degeneracy caveat.} If $\omega_i = \omega_j$ (degenerate energies), then $R_{ij}$ is $U(1)$-invariant and the gap vanishes identically. The $U(1)$ twirl eliminates coherence between distinct energy eigenspaces but preserves coherence within degenerate subspaces; this is why we require $\omega_i \ne \omega_j$ for the gap to be generically nonzero.

\subsection{Connection to the Coherence Witness}

The coherence witness $W(\rho, F)$ and the record fidelity gap $\Delta_\mathcal{T}(\rho_F, R)$ measure complementary aspects of boundary structure.

The witness $W(\rho, F) = \|F\rho(I-F)\|_1$ is a state-side diagnostic: it measures coherence across the gate boundary before any measurement or symmetrisation. It depends on $\rho$ and $F$ but not on any readout.

The gap $\Delta_\mathcal{T}(\rho_F, R)$ is a readout-side diagnostic: it measures how much a specific readout can distinguish the gated state from its symmetrised version. It depends on $\rho_F$, the symmetry $G$, and the readout $R$.

Together, they give a complete picture:
\begin{align*}
W(\rho, F) \text{ large} &\implies \text{state has cross-boundary coherence} \implies \text{$\beta$-bound relevant} \\
\Delta_\mathcal{T}(\rho_F, R) \text{ large} &\implies \text{readout sees asymmetric structure} \implies \text{symmetrisation matters}
\end{align*}

In experimental design, one checks both: the witness to verify that the prepared state is not trivially block-diagonal, and the gap to verify that the chosen readout is sensitive to the degrees of freedom of interest.

\section{Experimental Vignettes}

\subsection{Vignette Design Principles}

Each vignette specifies four ingredients: (1) Preparation $\rho$: the initial quantum state; (2) Gate $F$: a fixed projector defining the admitted sector; (3) Readout $E$ or $R$: an effect or bounded observable; (4) Symmetry $G$: the group whose twirl defines record structure (where applicable).

A vignette is falsifiable if the $\beta$-bound or record fidelity gap yields a quantitative prediction that can be compared against experimental data. The prediction takes the form: measured drift should satisfy the bound, with saturation possible for appropriately chosen parameters.

The design rule is: choose $(F, E, G)$ such that $\|[F, E]\| > 0$ and, if applicable, choose $R$ not $G$-invariant. Then vary preparation $\rho$ to explore the bound across different values of $s$ and $W(\rho, F)$.

\subsection{Vignette A: Hong--Ou--Mandel Interferometry}

\textbf{System.} Two-mode photonic system with modes $a$ and $b$. The relevant Hilbert space is the three-dimensional two-photon subspace spanned by $\{|2,0\rangle, |1,1\rangle, |0,2\rangle\}$.

\textbf{Gate.} Let $U_{\mathrm{BS}}$ denote the balanced beam-splitter unitary. Define $F_{\mathrm{out}}$ as the projector onto the bunched-output subspace:
\[
F_{\mathrm{out}} = |2,0\rangle\langle 2,0| + |0,2\rangle\langle 0,2|.
\]
This is a rank-2 projector on the two-photon subspace. The effective gate in the input-mode basis is $F = U_{\mathrm{BS}}^\dagger F_{\mathrm{out}} U_{\mathrm{BS}}$.

\textbf{Readout.} We consider two readouts to illustrate the bound's behaviour.

\emph{Null control.} The coincidence-detection effect $E_{\mathrm{coinc}} = |1,1\rangle\langle 1,1|$ projects onto the antibunched output. Since $F_{\mathrm{out}} + E_{\mathrm{coinc}} = I$ on the two-photon subspace, we have $[F_{\mathrm{out}}, E_{\mathrm{coinc}}] = 0$. The $\beta$-bound predicts $\Delta p_F(E_{\mathrm{coinc}}) = 0$, which serves as an experimental control verifying the sanity check of \S2.3.

\emph{Noncommuting readout.} Define
\[
|\chi\rangle = (|1,1\rangle + |2,0\rangle)/\sqrt{2}
\]
and let $E_{\mathrm{mix}} = |\chi\rangle\langle\chi|$. This effect does not respect the bunched/coincidence decomposition: it has support on both $F_{\mathrm{out}}$ and its complement. Therefore $[F_{\mathrm{out}}, E_{\mathrm{mix}}] \ne 0$.

Explicitly, in the ordered basis $\{|2,0\rangle, |0,2\rangle, |1,1\rangle\}$:
\[
F_{\mathrm{out}} = \mathrm{diag}(1, 1, 0), \quad E_{\mathrm{mix}} = \frac{1}{2} \begin{pmatrix} 1 & 0 & 1 \\ 0 & 0 & 0 \\ 1 & 0 & 1 \end{pmatrix}.
\]
The off-diagonal block $F_{\mathrm{out}} E_{\mathrm{mix}} (I - F_{\mathrm{out}})$ is nonzero, giving
\[
\|[F_{\mathrm{out}}, E_{\mathrm{mix}}]\| = 1/2.
\]

\textbf{Prediction.} The $\beta$-bound constrains drift for the noncommuting readout:
\[
|\Delta p_F(E_{\mathrm{mix}})| \le 2\sqrt{\frac{1-s}{s}} \cdot \frac{1}{2} = \sqrt{\frac{1-s}{s}}.
\]
For the null control, $\Delta p_F(E_{\mathrm{coinc}}) = 0$ regardless of $\rho$.

\textbf{Experimental test.} Prepare input states with calibrated $s$, measure both $\Delta p_F(E_{\mathrm{coinc}})$ (expected: zero) and $\Delta p_F(E_{\mathrm{mix}})$ (expected: bounded as above). Standard HOM implementations realise the gate via coincidence post-selection and thereby discard the rejected sector, which precludes Route C (gate-reflection) access in \S4.6. In that case one uses Route A (tomography/calibration) or Route B (interferometric access to the relevant Jordan-product term). Measuring $E_{\mathrm{mix}}$ itself requires auxiliary interferometry to coherently recombine bunched and coincidence sectors; this is experimentally demanding but makes the noncommuting-readout test explicit. Route C becomes feasible only in augmented architectures with coherent phase control across bunched/antibunched sectors.

The Hong--Ou--Mandel effect \citep{HongOuMandel1987} provides a well-characterised platform for such tests, with photon indistinguishability serving as the control parameter for $s$.

\subsection{Vignette B: Atomic Energy-Basis Dephasing}

\textbf{System.} Multi-level atom with Hamiltonian $H = \sum_i \omega_i |i\rangle\langle i|$, where the $\omega_i$ are non-degenerate.

\textbf{Gate.} Projector $F$ onto a subspace spanned by a subset of energy eigenstates, e.g., $F = |1\rangle\langle 1| + |2\rangle\langle 2|$ for a three-level system. The gate commutes with $H$ by construction: $[H, F] = 0$.

\textbf{Symmetry.} $U(1)$ generated by $H$. The twirl is dephasing in the energy eigenbasis.

\textbf{Readout.} Coherence observable $R_{12} = |1\rangle\langle 2| + |2\rangle\langle 1|$. Since $\omega_1 \ne \omega_2$, this is not $U(1)$-invariant.

\textbf{Preparation.} Superposition states $\rho = |\psi\rangle\langle\psi|$ with $|\psi\rangle = c_1|1\rangle + c_2|2\rangle + c_3|3\rangle$. The gate-passage probability is $s = |c_1|^2 + |c_2|^2$. The gated state is
\[
\rho_F = \frac{(c_1|1\rangle + c_2|2\rangle)(c_1^*\langle 1| + c_2^*\langle 2|)}{s}.
\]

\textbf{Prediction.} The record fidelity gap is
\[
\Delta_\mathcal{T}(\rho_F, R_{12}) = 2 \mathrm{Re}((\rho_F)_{12}) = \frac{2 \mathrm{Re}(c_1 c_2^*)}{s}.
\]
This is directly measurable via Ramsey interferometry \citep{Ramsey1950} or related coherence-detection protocols.

\textbf{Experimental test.} Prepare states with known coefficients $c_i$, apply the gate (post-select on the $\{|1\rangle, |2\rangle\}$ subspace), measure the coherence via Ramsey fringes, and compare to the predicted gap. Varying $c_3$ changes $s$ while keeping the within-gate coherence $c_1 c_2^*$ fixed (up to normalisation), allowing exploration of the $s$-dependence.

\subsection{Vignette C: Decoherence-Induced Classicality}

\textbf{System.} A system $S$ coupled to an environment $E$, with total Hamiltonian
\[
H_{\mathrm{tot}} = H_S \otimes I_E + I_S \otimes H_E + H_{\mathrm{int}}.
\]
The interaction Hamiltonian $H_{\mathrm{int}}$ defines a pointer basis $\{|p_i\rangle\}$ for $S$ via einselection \citep{Zurek2003}.

\textbf{Gate.} Projector $F = \sum_{i \in A} |p_i\rangle\langle p_i|$ onto a subset $A$ of pointer states.

\textbf{Interaction Hamiltonian.} Pure-dephasing coupling:
\[
H_{\mathrm{int}} = \sigma_z \otimes B
\]
where $B$ is a bath operator. This interaction picks out $\sigma_z$ eigenstates $\{|0\rangle, |1\rangle\}$ as pointer states, ensuring $[H_{\mathrm{int}}, F \otimes I_E] = 0$ for pointer-basis projectors $F$.

\textbf{Symmetry.} Tracing over the environment produces a CPTP channel $\mathcal{E}_t$ on $S$. For Markovian pure-dephasing dynamics generated by $H_{\mathrm{int}}=\sigma_z\otimes B$, the long-time limit
\[
\mathcal{T}_\infty := \lim_{t\to\infty} \mathcal{E}_t
\]
exists and coincides with full dephasing in the $\sigma_z$ pointer basis, equivalently the $U(1)$ twirl generated by $\sigma_z$. In the record-fidelity formalism we therefore take $\mathcal{T}_\infty$ as the relevant twirl map, and treat finite-time decoherence as an interpolation $\mathcal{E}_t \to \mathcal{T}_\infty$.

\textbf{Readout.} Pointer-basis coherence observable $R_{ij} = |p_i\rangle\langle p_j| + |p_j\rangle\langle p_i|$ for distinct pointer states $p_i, p_j \in A$.

\textbf{Prediction.} Since $\mathcal{T}_\infty$ removes pointer-basis coherences, $\mathrm{Tr}(\mathcal{T}_\infty[\rho_F] R_{ij})=0$ for $i\neq j$, and the associated record fidelity gap is
\[
\Delta_{\mathcal{T}_\infty}(\rho_F, R_{ij}) = \mathrm{Tr}(\rho_F R_{ij}) = 2\mathrm{Re}((\rho_F)_{ij}).
\]
At finite time, the experimentally observed record signal is
\[
\mathrm{Tr}(\mathcal{E}_t[\rho_F] R_{ij}) = 2 \mathrm{Re}((\rho_F(0))_{ij}) \cdot \gamma(t),
\]
where $\gamma(t)$ is the decoherence function (typically exponential decay). The difference $\mathrm{Tr}(\mathcal{E}_t[\rho_F] R_{ij})-\mathrm{Tr}(\mathcal{T}_\infty[\rho_F] R_{ij})$ tracks the decoherence timescale and vanishes as $t \to \infty$.

\textbf{Connection to einselection.} This vignette connects the record fidelity framework to Zurek's environment-induced superselection \citep{Zurek2003}. The pointer states are precisely those for which $W(\rho, F) \to 0$ under decoherence; the record signal above measures the rate of approach to the pointer-basis limit.

\textbf{Accessibility.} Superconducting qubits \citep{DevoretSchoelkopf2013}, trapped ions \citep{LeibfriedEtAl2003}, or NV centres \citep{DohertyEtAl2013} with controlled environment coupling. Decoherence timescales are well-characterised in these systems, and pointer-basis coherences can be measured via quantum state tomography.

\section{Discussion}

\subsection{Relation to Existing Frameworks}

The mathematical ingredients of this paper (projective conditioning, pinching maps, trace-norm bounds) are individually well-known in operator algebra and quantum information theory \citep{NielsenChuang2000,Watrous2018}. What we claim is new is the packaging: an operational drift bound that explicitly connects state-side coherence (the witness $W$), observable-side boundary-crossing ($\|[F,E]\|$), and normalisation cost ($\sqrt{(1-s)/s}$) into a single falsifiable inequality. The vignette design rule (specify $(F, E, G)$ such that $\|[F,E]\| > 0$ and $W(\rho,F) > 0$, then check the bound) provides a template for experimental tests that, to our knowledge, has not been articulated in this form.

We now situate the framework relative to three established approaches.

\textbf{Decoherence program.} The decoherence approach explains the emergence of classical behaviour through environment-induced superselection: pointer states survive because they are robust under system--environment interaction, while superpositions decohere \citep{Zurek2003,Schlosshauer2007,JoosEtAl2003}. Vignette C shows how the record fidelity framework connects to this picture. The asymptotic projection $\mathcal{T}_\infty$ formalises the pointer-basis limit, and the finite-time record signal $\mathrm{Tr}(\mathcal{E}_t[\rho_F]R)$ tracks the decoherence timescale for coherence-sensitive readouts. The $\beta$-bound provides a complementary constraint: even before decoherence completes, the drift in probability assignments is bounded by the gate--observable commutator.

The framework does not replace decoherence theory but provides quantitative bridges. Where decoherence theory explains why certain bases are selected, the $\beta$-bound constrains what happens during the selection process.

\textbf{Consistent histories.} In the consistent histories formalism \citep{Griffiths2002,GellMannHartle1993,Omnes1994}, one defines ``realms'' of histories that satisfy decoherence conditions, and probabilities are assigned only within a realm. The gate $F$ can be viewed as selecting a realm: conditioning on $F$ restricts attention to histories passing through the admitted sector. The $\beta$-bound then constrains how probability assignments behave when the readout $E$ does not fully respect the realm structure (i.e., when $[F,E] \ne 0$).

This connection is suggestive but imperfect. Consistent histories involves sequential projectors and decoherence functionals; our framework treats a single gate--measurement pair. A full translation would require extending the $\beta$-bound to sequential gating, which we leave to future work.

\textbf{Operational and QBist approaches.} In operational approaches, probabilities are understood as encoding an agent's expectations given their measurement context \citep{FuchsSchack2013,Caves2002}. The drift $\Delta p_F(E)$ can be read as the discrepancy between two natural ways of assigning probabilities: the fully-gated assignment $p_\rho(E)$, and the average of the partially-gated assignments $q_L(E)$ and $q_R(E)$. The $\beta$-bound constrains this discrepancy as a function of the agent's choices (gate $F$, readout $E$) and the prepared state $\rho$.

From a QBist perspective, the bound might be interpreted as a coherence constraint on rational probability revision: if an agent conditions on passing a gate, their revised probabilities cannot deviate arbitrarily from certain natural reference points. We do not insist on this interpretation, but note that the framework is compatible with it.

\subsubsection{Relation to Contextuality and the Quantum Zeno Effect}

While the $\beta$-bound constrains probabilities involving non-commuting operations, it differs in kind from contextuality inequalities and from Zeno dynamics. Contextuality results typically establish (often in a state-independent way, or by optimisation over states) the impossibility of context-independent value assignments across incompatible measurement contexts \citep{KochenSpecker1967}. By contrast, the $\beta$-bound is a state-dependent stability inequality: it bounds a specific operational functional $\Delta p_F(E)$ for fixed $(\rho,F,E)$, without claiming that any alternative assignment scheme is impossible. Similarly, the quantum Zeno effect concerns repeated measurements and dynamical backaction under frequent monitoring \citep{MisraSudarshan1977}. The present framework addresses single-step gating mismatch and the resulting conditioned-probability drift, independent of subsequent evolution.

\subsection{What the Framework Does Not Claim}

We emphasise several limitations.

\textbf{Not a solution to the measurement problem.} The $\beta$-bound constrains probability assignments under gating but does not explain why measurements have definite outcomes. The framework is compatible with collapse, many-worlds, and hidden-variable interpretations; it provides structure that any interpretation must accommodate, but does not adjudicate between them.

\textbf{Not a claim about ontology.} We do not claim that the gate $F$ corresponds to a physical process, an observer's knowledge, or an objective feature of reality. The framework is operational: it specifies mathematical relationships between preparations, gates, and readouts, and yields testable predictions. Ontological interpretation is left to the reader.

\textbf{Not a complete theory of records.} The record fidelity gap captures one aspect of record structure, namely sensitivity to symmetrisation, but does not address questions about what constitutes a record, how records are created, or why records are stable. These are important questions that go beyond the present framework.

\subsection{Limitations and Open Questions}

Several directions remain open.

\textbf{Projector-only formulation.} The present paper treats gates as projectors ($F^2 = F$). A natural generalisation uses positive contractions $0 \le F \le I$, representing ``soft'' gates. The proof strategy extends if we formulate the drift and commutator in terms of $F^{1/2}$, but controlling $\|[F^{1/2},E]\|$ by $\|[F,E]\|$ requires a spectral-gap condition on $F$ (or a bounded-inverse condition on its support, restricted to the prepared ensemble). We leave the soft-gate generalisation to future work.

\textbf{Sequential gating.} The $\beta$-bound treats a single gate. Sequential gating (conditioning on passing $F_1$, then $F_2$, and so on) appears in consistent histories and quantum trajectories \citep{Griffiths2002,WisemanMilburn2009}. Extending the bound to sequences would require tracking how drift accumulates or cancels across multiple gates.

\textbf{Infinite-dimensional systems.} The present results assume finite dimension or bounded operators on a trace-class state. Unbounded observables (such as position or momentum) require additional care with domains \citep{ReedSimon1980}. The core inequalities should extend with standard modifications, but we have not verified the details.

\textbf{Thermodynamic constraints.} The coherence witness $W(\rho,F)$ and the record fidelity gap $\Delta_\mathcal{T}(\rho_F, R)$ both involve coherence across boundaries. Coherence is a thermodynamic resource \citep{LostaglioeEtAl2015,StreltsovEtAl2017}, and one expects connections to fluctuation theorems and entropy production. Exploring these connections could yield constraints on the energetic cost of maintaining or erasing records.

\section{Conclusion}

We have developed a measurement-theoretic framework for projective gating, centred on three results.

The $\beta$-bound (Theorem 4.1) controls the drift in probability assignments when gating and measurement fail to commute. The bound
\[
|\Delta p_F(E)| \le 2\sqrt{\frac{1-s}{s}} \cdot \|[F,E]\|
\]
makes precise the intuition that boundary-crossing effects are controlled by the commutator, with amplification when conditioning on rare events.

The coherence witness $W(\rho,F) = \|F\rho(I-F)\|_1$ diagnoses state-side coherence across the gate boundary, satisfying the state budget inequality $W(\rho,F) \le \sqrt{s(1-s)}$.

The record fidelity gap $\Delta_\mathcal{T}(\rho_F, R)$ measures how much expectation values change under symmetrisation, vanishing precisely when the readout cannot see the degrees of freedom that the twirl removes.

Together, these quantities form a unified package. The drift is controlled by state-side coherence (the witness), observable-side boundary-crossing (the commutator), and normalisation cost (the $\sqrt{(1-s)/s}$ factor). The vignettes of Section~6 show how to instantiate this package in concrete experimental settings.

The framework is operationally accessible via multiple routes with platform-dependent trade-offs, and interpretation-neutral. It does not solve the measurement problem or make ontological claims. What it provides is a set of quantitative constraints that any account of quantum measurement must respect, along with a template for experimental tests.

We hope these tools prove useful for both foundational research and practical applications in quantum information, metrology, and the study of decoherence.

\appendix

\section{Proofs}

\subsection{Proof of Lemma 3.1 (State Budget Inequality)}

\begin{lemma}
For any density operator $\rho$ and projector $F$ with $s = \mathrm{Tr}(\rho F) \in (0,1]$,
\[
W(\rho,F) = \|F\rho(I-F)\|_1 \le \sqrt{s(1-s)}.
\]
\end{lemma}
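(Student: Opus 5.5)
The plan is to use the variational characterisation of the trace norm, $\|X\|_1 = \sup\{|\mathrm{Tr}(U^\dagger X)| : \|U\| \le 1\}$, with $X = F\rho(I-F)$. Fix an arbitrary contraction $U$. We then bound $|\mathrm{Tr}(U^\dagger X)|$ by $\sqrt{s(1-s)}$, uniformly in $U$, and take the supremum.

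\textbf{Step 1: factor through $\rho^{1/2}$.} Since $\rho \ge 0$, its square root $\rho^{1/2}$ exists. By cyclicity of the trace,
\[
\mathrm{Tr}(U^\dagger F\rho(I-F)) = \mathrm{Tr}\big((I-F)U^\dagger F\rho^{1/2}\rho^{1/2}\big) = \mathrm{Tr}\big((\rho^{1/2}F U(I-F))^\dagger \rho^{1/2}\big)\cdot(\text{suitably arranged}).
\]
The cleanest pairing writes the expression as a Hilbert--Schmidt inner product $\langle Y, Z\rangle_2 = \mathrm{Tr}(Y^\dagger Z)$ with
\[
Y = \rho^{1/2}F, \qquad Z = \rho^{1/2}(I-F)U^\dagger .
\]
Indeed, $\mathrm{Tr}(F\rho^{1/2}\rho^{1/2}(I-F)U^\dagger) = \mathrm{Tr}(U^\dagger F\rho(I-F))$ by cyclicity. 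The care here is just bookkeeping of which side $U$ sits on; any placement works, because only $\|U\|\le 1$ is used.

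\textbf{Step 2: Cauchy--Schwarz.} Applying Cauchy--Schwarz for the Hilbert--Schmidt inner product gives
\[
|\mathrm{Tr}(U^\dagger X)| \le \|Y\|_2\,\|Z\|_2 .
\]
We evaluate each factor:
\[
\|Y\|_2^2 = \mathrm{Tr}(F\rho F) = \mathrm{Tr}(\rho F) = s,
\]
using $F^2 = F$. For the second factor,
\[
\|Z\|_2 \le \|\rho^{1/2}(I-F)\|_2\,\|U\| \le \sqrt{\mathrm{Tr}(\rho(I-F))} = \sqrt{1-s},
\]
using $(I-F)^2 = I-F$, $\mathrm{Tr}\rho = 1$, and the ideal inequality $\|AB\|_2 \le \|A\|_2\|B\|$.

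\textbf{Step 3: conclude.} Combining the two factors gives $|\mathrm{Tr}(U^\dagger X)| \le \sqrt{s(1-s)}$ for every contraction $U$. Taking the supremum over $U$ yields $W(\rho,F)\le\sqrt{s(1-s)}$.

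There is no real obstacle. The only point needing care is the choice of factorisation, so that one Hilbert--Schmidt factor carries exactly $F$ and the other exactly $I-F$, each sandwiching $\rho$. An alternative route, if preferred, is to note that the block matrix $\begin{pmatrix}\rho_{11}&\rho_{12}\\ \rho_{21}&\rho_{22}\end{pmatrix}\ge 0$ forces $\rho_{12} = \rho_{11}^{1/2}K\rho_{22}^{1/2}$ with $\|K\|\le 1$. Then Hölder's inequality gives $\|\rho_{12}\|_1 \le \|\rho_{11}^{1/2}\|_2\|\rho_{22}^{1/2}\|_2 = \sqrt{s(1-s)}$, which is the same bound.
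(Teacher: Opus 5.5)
Your proof is correct and follows the paper's argument: the variational formula for the trace norm, then Hilbert--Schmidt Cauchy--Schwarz with one factor $\rho^{1/2}F$ and the other $\rho^{1/2}(I-F)$ times the contraction, giving $\sqrt{s}\cdot\sqrt{1-s}$. Your choice $Z=\rho^{1/2}(I-F)U^\dagger$ makes the cyclicity identity exact (the paper's pairing with $U$ literally yields $\mathrm{Tr}(UX)$, which is harmless since the supremum runs over all contractions), and you can simply delete the unneeded ``suitably arranged'' display in Step 1.
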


\begin{proof}
The trace norm is characterised by
\[
\|X\|_1 = \sup \{ |\mathrm{Tr}(U^\dagger X)| : \|U\| \le 1 \}.
\]
Let $U$ be any operator with $\|U\| \le 1$. We compute
\[
|\mathrm{Tr}(U^\dagger F\rho(I-F))| = |\mathrm{Tr}((\rho^{1/2}F)^\dagger (\rho^{1/2}(I-F)U))|.
\]
By the Cauchy--Schwarz inequality for the Hilbert--Schmidt inner product $\langle A,B\rangle = \mathrm{Tr}(A^\dagger B)$,
\[
|\mathrm{Tr}(A^\dagger B)| \le \|A\|_2 \|B\|_2
\]
where $\|A\|_2 = \sqrt{\mathrm{Tr}(A^\dagger A)}$ is the Hilbert--Schmidt norm.

Applying this with $A = \rho^{1/2}F$ and $B = \rho^{1/2}(I-F)U$:
\begin{align*}
\|\rho^{1/2}F\|_2^2 &= \mathrm{Tr}(F\rho^{1/2}\rho^{1/2}F) = \mathrm{Tr}(F\rho F) = \mathrm{Tr}(\rho F^2) = \mathrm{Tr}(\rho F) = s, \\
\|\rho^{1/2}(I-F)U\|_2^2 &= \mathrm{Tr}(U^\dagger(I-F)\rho(I-F)U) \le \|U\|^2 \mathrm{Tr}((I-F)\rho(I-F)) \\
&= \mathrm{Tr}(\rho(I-F)^2) = \mathrm{Tr}(\rho(I-F)) = 1-s,
\end{align*}
where we used $\|U\| \le 1$ and cyclicity of trace.

Therefore
\[
|\mathrm{Tr}(U^\dagger F\rho(I-F))| \le \sqrt{s} \cdot \sqrt{1-s} = \sqrt{s(1-s)}.
\]
Taking the supremum over $U$ with $\|U\| \le 1$ gives the result.
\end{proof}

\subsection{Proof of Theorem 4.1 (\texorpdfstring{$\beta$}{β}-Bound)}

\begin{theorem}
Let $\rho$ be a density operator, $F$ a projector, and $E$ an effect. Define $s = \mathrm{Tr}(\rho F) \in (0,1]$, $p_\rho(E) = \mathrm{Tr}(\rho FEF)/s$, and $\varepsilon = \|[F,E]\|$. Define the symmetric partial-gating drift
\[
\Delta p_F(E) = \frac{\mathrm{Tr}(\rho FE) + \mathrm{Tr}(\rho EF)}{s} - 2p_\rho(E).
\]
Then
\[
|\Delta p_F(E)| \le 2\sqrt{\frac{1-s}{s}} \cdot \varepsilon.
\]
\end{theorem}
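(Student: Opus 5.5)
The plan is to follow the block-decomposition route sketched in Section~4.2, checking each algebraic identity in turn. Write $F^\perp = I-F$ and decompose $E = FEF + FEF^\perp + F^\perp E F + F^\perp E F^\perp$, with $B = FEF^\perp$. The numerator of the drift is then
\[
\mathrm{Tr}(\rho FE) + \mathrm{Tr}(\rho EF) - 2\mathrm{Tr}(\rho FEF) = \mathrm{Tr}(\rho F E F^\perp) + \mathrm{Tr}(\rho F^\perp E F),
\]
since $FE = FEF + FEF^\perp$ and $EF = FEF + F^\perp EF$. By cyclicity this equals
\[
\mathrm{Tr}\bigl((F^\perp\rho F)\, B\bigr) + \mathrm{Tr}\bigl((F\rho F^\perp)\, B^\dagger\bigr),
\]
which is Step~3 of the sketch. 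It is worth noting that the two terms are complex conjugates of each other, so the numerator is $2\,\mathrm{Re}\,\mathrm{Tr}(F^\perp\rho F B)$.

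Next I identify $\varepsilon$ with $\|B\|$. Since $FE - EF = FEF^\perp - F^\perp E F = B - B^\dagger$, the commutator $[F,E]$ is block off-diagonal with blocks $B$ and $-B^\dagger$. The operator norm of such a block-antidiagonal operator is $\max(\|B\|,\|B^\dagger\|) = \|B\|$, because $[F,E]^\dagger[F,E]$ is block-diagonal with blocks $BB^\dagger$ and $B^\dagger B$, which have the same nonzero spectrum. Hence $\varepsilon = \|B\|$.

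To bound the numerator, I use Hölder's inequality $|\mathrm{Tr}(XY)| \le \|X\|_1\|Y\|$ on each term:
\[
|\mathrm{Tr}(F^\perp\rho F B)| \le \|F^\perp\rho F\|_1\,\|B\| = W(\rho,F)\,\varepsilon,
\]
using the symmetry property $W(\rho,F) = \|F\rho F^\perp\|_1 = \|F^\perp \rho F\|_1$ from Section~3.2. The conjugate term is bounded in the same way. This gives $|\Delta p_F(E)| \le 2W(\rho,F)\varepsilon/s$, which is the witness form in Corollary~4.2. Lemma~3.1 then gives $W(\rho,F) \le \sqrt{s(1-s)}$, and dividing by $s$ produces $2\sqrt{(1-s)/s}\,\varepsilon$. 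The case $s=1$ is covered: then $F^\perp \rho F = 0$ and the bound reads $0 \le 0$.

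The only delicate step is the norm identity $\|[F,E]\| = \|B\|$. The rest is bookkeeping, and positivity of $E$ is never used, only $E = E^\dagger$. I would verify that identity carefully through the $[F,E]^\dagger[F,E]$ computation above. Sharpness of the constant $2$ is not part of the statement, so I would not address it here.
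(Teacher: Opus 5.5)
Your proposal is correct and follows the paper's proof essentially step for step: the block decomposition with $B = FE(I-F)$, the identification $\varepsilon = \|B\|$, rewriting the drift numerator as $\mathrm{Tr}(\rho_{21}B) + \mathrm{Tr}(\rho_{12}B^\dagger)$, Hölder's inequality, and then Lemma~3.1. You also check explicitly that $[F,E]^\dagger[F,E] = BB^\dagger + B^\dagger B$ is block-diagonal, which justifies the norm identity $\|[F,E]\| = \|B\|$ that the paper only asserts.
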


\begin{proof}
\textbf{Step 1 (Block form and commutator size).} Fix the orthogonal decomposition $\mathcal{H} = \mathrm{ran}(F) \oplus \mathrm{ran}(I-F)$. Write operators in $2 \times 2$ block form:
\[
E = \begin{pmatrix} A & B \\ B^\dagger & C \end{pmatrix}, \quad \rho = \begin{pmatrix} \rho_{11} & \rho_{12} \\ \rho_{21} & \rho_{22} \end{pmatrix}
\]
where $A = FEF$, $B = FE(I-F)$, $C = (I-F)E(I-F)$, and similarly $\rho_{11} = F\rho F$, $\rho_{12} = F\rho(I-F)$, $\rho_{21} = (I-F)\rho F$, $\rho_{22} = (I-F)\rho(I-F)$.

For a projector $F$,
\[
[F,E] = F E - E F = \begin{pmatrix} 0 & B \\ -B^\dagger & 0 \end{pmatrix}.
\]
Hence $\|[F,E]\| = \|B\|$. Therefore $\varepsilon = \|B\|$.

\textbf{Step 2 (Express the drift by off-diagonal blocks).} We compute the relevant traces:
\[
\mathrm{Tr}(\rho FEF) = \mathrm{Tr}(\rho_{11} A).
\]
For the one-sided terms:
\begin{align*}
\mathrm{Tr}(\rho FE) &= \mathrm{Tr}(\rho_{11} A) + \mathrm{Tr}(\rho_{21} B), \\
\mathrm{Tr}(\rho EF) &= \mathrm{Tr}(\rho_{11} A) + \mathrm{Tr}(\rho_{12} B^\dagger).
\end{align*}
Therefore the drift is
\[
\Delta p_F(E) = \frac{\mathrm{Tr}(\rho_{21} B) + \mathrm{Tr}(\rho_{12} B^\dagger)}{s}.
\]

\textbf{Step 3 (Trace-norm bound).} Using $|\mathrm{Tr}(XY)| \le \|X\|_1 \|Y\|$ for trace-class $X$ and bounded $Y$:
\[
|\mathrm{Tr}(\rho_{21} B)| \le \|\rho_{21}\|_1 \|B\|, \quad |\mathrm{Tr}(\rho_{12} B^\dagger)| \le \|\rho_{12}\|_1 \|B^\dagger\| = \|\rho_{12}\|_1 \|B\|.
\]
Hence
\[
|\Delta p_F(E)| \le \frac{(\|\rho_{21}\|_1 + \|\rho_{12}\|_1) \|B\|}{s}.
\]

\textbf{Step 4 (Bound the off-diagonal trace norms by $s$).} By Lemma 3.1,
\[
\|\rho_{12}\|_1 = \|F\rho(I-F)\|_1 \le \sqrt{s(1-s)}, \quad \|\rho_{21}\|_1 = \|(I-F)\rho F\|_1 \le \sqrt{s(1-s)}.
\]

\textbf{Step 5 (Combine).} Substituting into the bound from Step 3:
\[
|\Delta p_F(E)| \le \frac{2\sqrt{s(1-s)} \cdot \|B\|}{s} = 2\sqrt{\frac{s(1-s)}{s^2}} \cdot \|B\| = 2\sqrt{\frac{1-s}{s}} \cdot \|B\|.
\]
Using $\|B\| = \varepsilon$ from Step 1:
\[
|\Delta p_F(E)| \le 2\sqrt{\frac{1-s}{s}} \cdot \varepsilon. \qedhere
\]
\end{proof}

\textbf{Saturation criterion.} The bound is saturated when two conditions hold: (i) $\rho$ is pure with support in both sectors such that $\|\rho_{12}\|_1 = \sqrt{s(1-s)}$ (maximal cross-boundary coherence at fixed $s$); and (ii) the off-diagonal block $B = FE(I-F)$ is aligned with the polar decomposition of $\rho_{12}$, meaning $\mathrm{Tr}(\rho_{12} B^\dagger) = \|\rho_{12}\|_1 \|B\|$. In general, misalignment between $\rho_{12}$ and $B$, or mixedness in $\rho$, causes the bound to be strict.

\subsection{Proof of Twirl--Gate Commutation}

\begin{proposition}
Let $G$ be a compact group with unitary representation $g \mapsto U_g$, and let $\mathcal{T}$ be the associated twirl. Let $F$ be a projector satisfying $U_g F U_g^\dagger = F$ for all $g \in G$. Then for any operator $X$,
\[
\mathcal{T}[FXF] = F \mathcal{T}[X] F.
\]
\end{proposition}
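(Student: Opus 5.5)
The plan is to reduce the identity to a pointwise statement under the Haar integral. The hypothesis $U_g F U_g^\dagger = F$ for every $g \in G$ says that each $U_g$ commutes with $F$. Multiplying $U_g F U_g^\dagger = F$ on the right by $U_g$ and using unitarity ($U_g^\dagger U_g = I$) gives $U_g F = F U_g$. Taking adjoints then gives $F U_g^\dagger = U_g^\dagger F$. I will record both commutation relations first, since they are exactly what is needed to move $F$ past $U_g$ and $U_g^\dagger$.

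Next, for each fixed $g$, I compute
\[
U_g (FXF) U_g^\dagger = (U_g F) X (F U_g^\dagger) = F U_g X U_g^\dagger F.
\]
So the integrand defining $\mathcal{T}[FXF]$ equals $F\,(U_g X U_g^\dagger)\,F$ pointwise in $g$.

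The final step is to pull the fixed operators $F$ out of the Haar integral:
\[
\int_G F\,U_g X U_g^\dagger\,F\,d\mu(g) = F\left(\int_G U_g X U_g^\dagger\,d\mu(g)\right)F = F\,\mathcal{T}[X]\,F.
\]
This is linearity of the integral with respect to left and right multiplication by a fixed bounded operator. In the finite-dimensional setting of the paper, the integral is an entrywise (Bochner) integral of a continuous matrix-valued function on a compact group, so linearity is immediate. For a finite group, it is just distributivity over the finite sum. In the infinite-dimensional trace-class setting, the map $Y \mapsto FYF$ is a bounded linear map on trace-class operators, and bounded linear maps commute with Bochner integrals.

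There is no real obstacle. The only point that needs care is the justification for interchanging $F$ with the integral when $\mathcal{H}$ is infinite-dimensional, and I would handle it with the boundedness remark above. As a byproduct, taking $X=\rho$ and dividing by $s$ gives $\mathcal{T}[\rho_F] = F\,\mathcal{T}[\rho]\,F/s$. This is the compatibility used in Section 5. Since $\mathcal{T}$ is trace-preserving, it also gives $\mathrm{Tr}(F\,\mathcal{T}[\rho]) = s$, so the normalisation is unchanged by twirling.
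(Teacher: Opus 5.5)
Your proof is correct and matches the paper's argument: use invariance to get $U_gF = FU_g$, rewrite the integrand pointwise as $F\,U_gXU_g^\dagger\,F$, and pull the fixed $F$ out of the Haar integral by linearity. Your added justification for the interchange in infinite dimensions and the normalisation remark $\mathrm{Tr}(F\,\mathcal{T}[\rho]) = s$ are both valid; the paper leaves these implicit.
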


\begin{proof}
By definition,
\[
\mathcal{T}[FXF] = \int_G U_g (FXF) U_g^\dagger \, d\mu(g).
\]
Using $U_g F = F U_g$ (which follows from $U_g F U_g^\dagger = F$):
\[
U_g (FXF) U_g^\dagger = U_g F X F U_g^\dagger = F U_g X U_g^\dagger F.
\]
Therefore
\[
\mathcal{T}[FXF] = \int_G F U_g X U_g^\dagger F \, d\mu(g) = F \left(\int_G U_g X U_g^\dagger \, d\mu(g)\right) F = F \mathcal{T}[X] F. \qedhere
\]
\end{proof}

\textbf{Corollary.} Under the same hypotheses, for $\rho_F = F\rho F/s$ with $s = \mathrm{Tr}(\rho F)$,
\[
\mathcal{T}[\rho_F] = \frac{F \mathcal{T}[\rho] F}{s}.
\]
This shows that twirling and gating commute: one obtains the same result by twirling first and then gating, or by gating first and then twirling.

\subsection{Gate-reflection Identity}

\begin{proposition}
Let $F$ be a projector and define the unitary reflection $U_F=2F-I$. For any density operator $\rho$ and bounded operator $E$,
\[
\Delta p_F(E)=\frac{\mathrm{Tr}(\rho E)-\mathrm{Tr}(\rho U_F E U_F)}{2s},
\quad s=\mathrm{Tr}(\rho F),
\]
where $\Delta p_F(E)=\big(\mathrm{Tr}(\rho FE)+\mathrm{Tr}(\rho EF)-2\mathrm{Tr}(\rho FEF)\big)/s$.
\end{proposition}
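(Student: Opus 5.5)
The plan is a direct algebraic expansion. Since $U_F = 2F - I$ is a polynomial in $F$, the identity reduces to expanding the conjugated operator and comparing terms with the numerator of $\Delta p_F(E)$. No positivity or effect property of $E$ is needed, only linearity of the trace. This is consistent with the statement allowing arbitrary bounded $E$.

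\textbf{Step 1.} Expand, using $F^2=F$ only where $F$'s are adjacent:
\[
U_F E U_F = (2F-I)E(2F-I) = 4FEF - 2FE - 2EF + E.
\]

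\textbf{Step 2.} Subtract from $E$:
\[
E - U_F E U_F = 2FE + 2EF - 4FEF = 2\bigl(FE + EF - 2FEF\bigr).
\]

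\textbf{Step 3.} Take the trace against $\rho$ and use linearity:
\[
\mathrm{Tr}(\rho E) - \mathrm{Tr}(\rho U_F E U_F) = 2\bigl(\mathrm{Tr}(\rho FE) + \mathrm{Tr}(\rho EF) - 2\mathrm{Tr}(\rho FEF)\bigr).
\]
Dividing by $2s$, with $s>0$ as assumed throughout, gives exactly the defining expression for $\Delta p_F(E)$.

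As a sanity check, the $2\times 2$ block form from the proof of Theorem 4.1 gives the same answer. There $U_F=\mathrm{diag}(I,-I)$, so conjugation flips the sign of the off-diagonal blocks $B$ and $B^\dagger$. The difference is therefore twice the off-diagonal part of $E$, and its trace against $\rho$ is $2(\mathrm{Tr}(\rho_{21}B)+\mathrm{Tr}(\rho_{12}B^\dagger))$. This matches Step 2 of that proof.

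There is no real obstacle. The only point requiring care is the bookkeeping of signs and factors of 2 in Step 1, together with noting that $U_F$ is unitary because $U_F^2 = 4F - 4F + I = I$. That fact is only needed to interpret $U_F$ as a physical operation, not for the identity itself.
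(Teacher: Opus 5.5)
Your proof is correct and is essentially the paper's argument: both expand $(2F-I)E(2F-I)=4FEF-2FE-2EF+E$, take the trace against $\rho$, and divide by $2s$; the paper only differs cosmetically by first isolating $\mathrm{Tr}(\rho FE)+\mathrm{Tr}(\rho EF)$ and then substituting. Your block-form cross-check is a nice addition, but it tacitly assumes Hermitian $E$ (lower-left block $B^\dagger$), whereas the identity itself holds for any bounded $E$.
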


\begin{proof}
Expand
\[
U_F E U_F=(2F-I)E(2F-I)=4FEF-2FE-2EF+E.
\]
Taking $\mathrm{Tr}(\rho\,\cdot)$ gives
\[
\mathrm{Tr}(\rho U_F E U_F)=4\mathrm{Tr}(\rho FEF)-2\mathrm{Tr}(\rho FE)-2\mathrm{Tr}(\rho EF)+\mathrm{Tr}(\rho E).
\]
Rearranging,
\[
\mathrm{Tr}(\rho FE)+\mathrm{Tr}(\rho EF)=\frac{4\mathrm{Tr}(\rho FEF)+\mathrm{Tr}(\rho E)-\mathrm{Tr}(\rho U_F E U_F)}{2}.
\]
Substituting into the definition of $\Delta p_F(E)$ yields
\[
\Delta p_F(E)=\frac{\mathrm{Tr}(\rho E)-\mathrm{Tr}(\rho U_F E U_F)}{2s}. \qedhere
\]
\end{proof}

\section{Vignette Technical Details}

The parameters below are illustrative and reflect typical orders of magnitude for current implementations. Specific values vary by platform and should be calibrated experimentally.

\subsection{Optical Setup Parameters (Vignette A)}

\textbf{Beam-splitter specification.} A balanced 50:50 beam splitter with unitary
\[
U_{\mathrm{BS}} = \exp(i\pi(a^\dagger b + b^\dagger a)/4)
\]
acting on mode operators $a$, $b$. On the two-photon subspace spanned by $\{|2,0\rangle, |1,1\rangle, |0,2\rangle\}$, the beam splitter acts as:
\begin{align*}
U_{\mathrm{BS}} |1,1\rangle &= (|2,0\rangle + |0,2\rangle)/\sqrt{2} \quad (\text{bunched}) \\
U_{\mathrm{BS}} |2,0\rangle &= (|2,0\rangle - 2|1,1\rangle + |0,2\rangle)/2 \\
U_{\mathrm{BS}} |0,2\rangle &= (|2,0\rangle + 2|1,1\rangle + |0,2\rangle)/2
\end{align*}

\textbf{Gate projector.} The bunched-output projector in the output basis is
\[
F_{\mathrm{out}} = |2,0\rangle\langle 2,0| + |0,2\rangle\langle 0,2|.
\]
In the input basis, $F = U_{\mathrm{BS}}^\dagger F_{\mathrm{out}} U_{\mathrm{BS}}$.

\textbf{Coincidence windows.} For typical parametric down-conversion sources \citep{KwiatEtAl1995}: photon arrival jitter $\sim$100~ps; coincidence window 1--10~ns (platform dependent); expected coincidence rate (distinguishable) $10^3$--$10^5$/s; expected coincidence rate (indistinguishable) suppressed by HOM dip.

\textbf{Bunching probability calibration.} The gate-passage probability $s = \mathrm{Tr}(\rho F_{\mathrm{out}})$ is measured directly as the bunching fraction: the ratio of bunched detections to total two-photon detections.

\subsection{Atomic System Parameters (Vignette B)}

\textbf{Level structure.} Consider alkali atoms (e.g., $^{87}$Rb) with: $|1\rangle$, $|2\rangle$: ground-state hyperfine levels, splitting $\sim$GHz; $|3\rangle$: excited state, splitting $\sim$THz from ground states.

\textbf{State preparation.} Coherent superpositions prepared via: optical pumping to initialise in $|1\rangle$; microwave or Raman pulses to create superpositions; optional coupling to $|3\rangle$ via resonant laser.

\textbf{Gate implementation.} Post-selection on the $\{|1\rangle, |2\rangle\}$ subspace via: state-selective fluorescence detection; push-out pulses removing population in $|3\rangle$; measurement-based heralding.

\textbf{Coherence measurement.} Ramsey interferometry \citep{Ramsey1950}: first $\pi/2$ pulse creates superposition; free evolution for time $\tau$; second $\pi/2$ pulse with variable phase $\phi$; population measurement gives $P(|2\rangle) = \frac{1}{2} + \frac{1}{2} \mathrm{Re}((\rho_F)_{12} e^{i\phi})$.

Fringe visibility directly yields $|(\rho_F)_{12}|$; phase of fringes yields $\arg((\rho_F)_{12})$.

\textbf{Decoherence rates.} Typical $T_2$ times vary widely: magnetically shielded $T_2 \sim 0.1$--1~s; unshielded $T_2 \sim 1$--100~ms; environment-limited $T_2 \sim 10$--100~$\mu$s.

\subsection{System--Environment Parameters (Vignette C)}

\textbf{Model system.} Superconducting transmon qubit or spin qubit with controlled dephasing \citep{DevoretSchoelkopf2013}.

\textbf{Pointer states.} Computational basis states $|0\rangle$, $|1\rangle$, selected by the $\sigma_z \otimes B$ interaction structure.

\textbf{Interaction Hamiltonian.} Pure-dephasing coupling:
\[
H_{\mathrm{int}} = \sigma_z \otimes B
\]
where $B$ is a bath operator. This picks out $\sigma_z$ eigenstates as pointer states.

\textbf{Decoherence function.} For a thermal or Gaussian bath:
\[
\gamma(t) = \exp(-t/T_2) \quad \text{or} \quad \exp(-(t/T_\phi)^2)
\]
where $T_2$ is the coherence time and $T_\phi$ characterises Gaussian decay.

\textbf{Typical parameters (order of magnitude):} superconducting qubits $T_2 \sim 10$--100~$\mu$s; spin qubits $T_2 \sim 1$--1000~$\mu$s; NV centres $T_2 \sim 1$--10~ms; measurement fidelity 90--99\%.

\textbf{Tomography protocol.} Quantum state tomography via: dispersive or projective readout; multiple measurement bases; maximum likelihood estimation with readout correction.

\subsection{Error Budget Analysis}

\textbf{Sources of systematic error:}

\emph{1. Gate imperfection.} If $F_{\mathrm{actual}} \ne F_{\mathrm{ideal}}$, the measured drift includes contributions from gate error as well as the $\beta$-bound effect. Mitigation: characterise gate via process tomography; bound gate error contribution separately.

\emph{2. Readout crosstalk.} If the readout $E_{\mathrm{actual}}$ has unintended components, $\|[F, E_{\mathrm{actual}}]\|$ differs from the intended value. Mitigation: calibrate readout POVM; compute commutator for actual readout.

\emph{3. State preparation error.} If $\rho_{\mathrm{actual}} \ne \rho_{\mathrm{intended}}$, the values of $s$ and $W(\rho,F)$ are affected. Mitigation: state tomography before gating; propagate preparation uncertainty to bound predictions.

\emph{4. Statistical uncertainty.} Finite sample sizes introduce uncertainty in estimated probabilities. For $N$ measurement shots:
\[
\sigma(p) \approx \sqrt{p(1-p)/N}, \quad \sigma(\Delta p_F) \approx \sqrt{\sigma(q_L)^2 + \sigma(q_R)^2 + 4\sigma(p_\rho)^2}.
\]
Typical requirements: $N \sim 10^4$--$10^6$ shots for sub-percent statistical uncertainty.

\emph{5. Drift and instability.} Slow drifts in experimental parameters can mimic or mask the $\beta$-bound effect. Mitigation: interleaved measurements; real-time calibration; Allan variance analysis.

\emph{6. Normalisation instability (small-$s$ regimes).} Any estimator that normalises by $\hat{s}$ inherits amplified uncertainty when $s\ll 1$. For Route C in \S4.6 with $\widehat{\Delta p_F(E)}=(\hat{p}_0-\hat{p}_1)/(2\hat{s})$, shot-noise scaling yields uncertainty growing approximately as $1/(s\sqrt{N})$ (and worse if uncertainty in $\hat{s}$ dominates). Maintaining fixed relative precision in rare-event regimes therefore requires $N$ scaling steeply with $1/s^2$.

\textbf{Falsification criterion.} The $\beta$-bound is falsified if:
\[
|\Delta p_F(E)|_{\mathrm{measured}} - 2\sigma_{\mathrm{total}} > 2\sqrt{\frac{1-s}{s}} \cdot \|[F,E]\|
\]
where $\sigma_{\mathrm{total}}$ includes both statistical and systematic uncertainties. This is a conservative criterion that accounts for the one-sided nature of the bound.

\bibliographystyle{plainnat}

\end{document}